  \newtheorem{theorem}{Theorem}
  \newtheorem{proof}{Proof}[section]
\begin{document}

\title{A Robust Color Edge Detection Algorithm Based on Quaternion Hardy Filter}
% author names and IEEE memberships
% note positions of commas and nonbreaking spaces ( ~ ) LaTeX will not break
% a structure at a ~ so this keeps an author's name from being broken across
% two lines.
% use \thanks{} to gain access to the first footnote area
% a separate \thanks must be used for each paragraph as LaTeX2e's \thanks
% was not built to handle multiple paragraphs

\author{Wenshan Bi, Dong Cheng and Kit Ian Kou % <-this % stops a space
\thanks{This work was supported in part by the Science and Technology Development Fund, Macau SAR FDCT/085/2018/A2 and  the Guangdong Basic and Applied Basic Research Foundation (No. 2019A1515111185).}
\thanks{ W. Bi and K.I. Kou are with the Department of Mathematics, Faculty of Science and Technology, University of Macau, Macau, China (e-mail: wenshan0608@163.com; kikou@um.edu.mo).}
\thanks{D. Cheng is with the Research Center for Mathematics and Mathematics Education, Beijing Normal University at Zhuhai, 519087, China (e-mail: chengdong720@163.com).}
% <-this % stops a space
}% <-this % stops a space

\maketitle

\begin{abstract}
%\boldmath
This paper presents a robust filter called quaternion Hardy filter (QHF) for color image edge detection. The  QHF  can be capable of color edge feature enhancement and noise resistance. It is flexible to use  QHF by selecting suitable parameters to handle  different levels of noise.   In particular, the quaternion analytic signal, which is an effective tool in color image processing, can also be produced by quaternion Hardy filtering with specific parameters. Based on the QHF and the improved Di Zenzo gradient operator, a novel color edge detection algorithm   is proposed. Importantly, it can be efficiently implemented by using the fast discrete quaternion Fourier transform technique. The experiments demonstrate that the proposed algorithm outperforms several widely used algorithms.
\end{abstract}

\begin{IEEEkeywords}
Quaternion Hardy filter, Color image edge detection, Quaternion analytic signal, Discrete quaternion Fourier transform.
\end{IEEEkeywords}

\IEEEpeerreviewmaketitle

\section{Introduction}

\IEEEPARstart{E}{dge}  detection is a fundamental problem in computer vision \cite{edge}. It has a wide range of applications, including  medical imaging \cite{medical}, lane detection \cite{lane1},  face recognition \cite{face}, weed detection \cite{weed}, and deep learning, the well known method, plays an essential role in image processing and data analysis \cite{ZT1}-\cite{d4}.
Additionally, Canny, differential phase congruence (DPC) and modified differential phase congruence (MDPC) detectors et al. have drawn wide attention and achieved great success in gray-scale edge detection \cite{Canny1986}-\cite{MDPC}.
Another optional  approach of edge detection is detecting edges independently in each of the three color channels, and then obtain the final edge map by combining three single channel edge results  according to some proposed rules \cite{separate}.  However, these methods ignore the relationship between different color channels of the image. Instead of separately computing the scaled gradient for each color component, a multi-channel gradient edge detector has been widely used since it was proposed by Di Zenzo \cite{B19}. In 2012, Jin \cite{IDZ} solved the uncertainty of the Di Zenzo gradient direction and presented an improved Di Zenzo (IDZ) gradient operator, which achieves a significant improvement over DZ. However, the IDZ algorithm is not suitable for the edge detection of noisy images.

\begin{figure}[t]
 \centering
 \includegraphics[height=3.5cm,width=7cm]{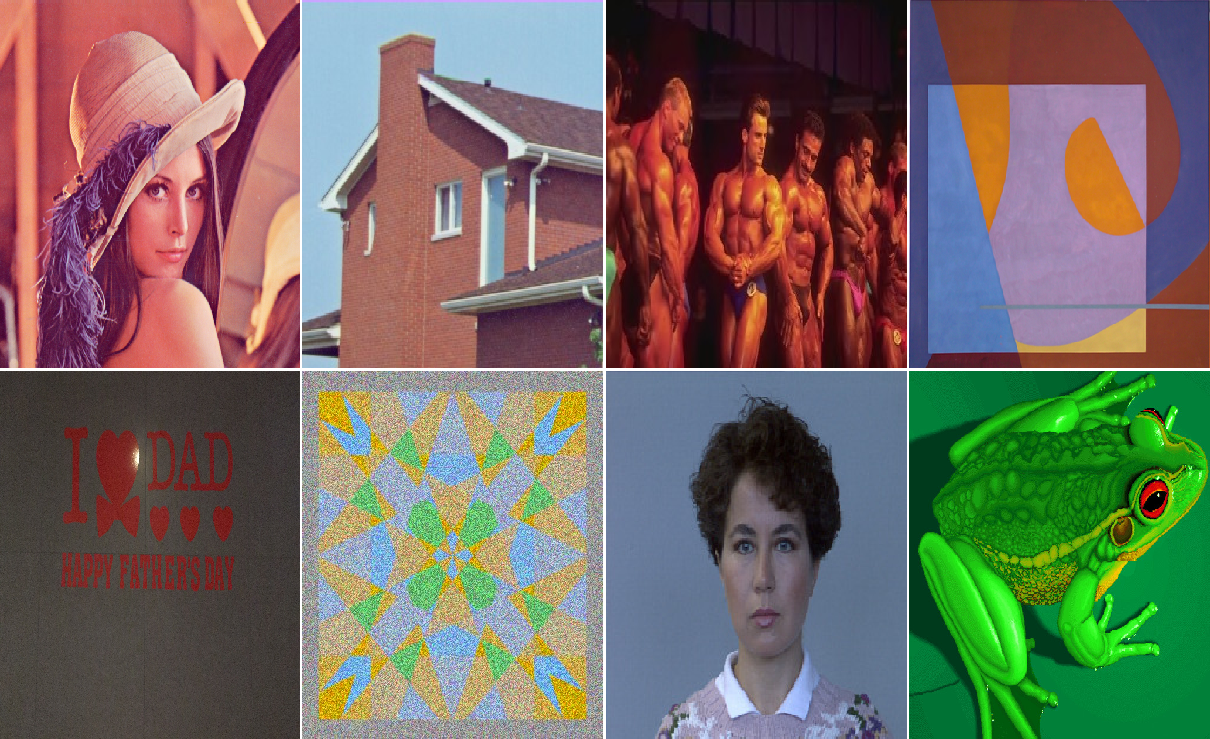}
  \caption{ The test images. From left to right and top to bottom: Lena, House, Men, T1, T2, T3, Cara and Frog.}
  \label{fig5}
\end{figure}

A growing number of research \cite{B21}-\cite{B24} indicates that quaternions are well adapted to color image processing by encoding color channels into three imaginary parts. The quaternion analytic signals are the boundary values of the functions in quaternion Hardy space \cite{B25}. Based on the quaternion analytic signal, researchers in \cite{B26} proposed some phase-based algorithms to detect the edge map of gray-scale images. It is shown that the introducing of quaternion analytic signal can reduce the influence of noise on edge detection results. It should be noted that although the tool of quaternion was applied, the algorithms (QDPC and QDPA) in \cite{B26} only considered the gray-scale images. Based on the quaternion Hardy filter and the improved Di Zenzo gradient operator, we propose a novel  edge detection algorithm,  which can be applied to color image.

The contributions of this paper are summarized as follows.
\begin{enumerate}
  \item We propose a novel filter, named quaternion Hardy filter (QHF), for color image processing. Compared with quaternion analytic signal, our method has a better performance due to the flexible parameter selection of QHF.
  \item Based on the QHF and the improved Di Zenzo gradient operator, we propose a robust color edge detection algorithm. It can enhance the color edge in a holistic manner by extracting the main features of the color image.
  \item We set up a series of experiments to verify the denoising performance of the proposed algorithm in various environments. Visual and quantitative analysis are both conducted. Three widely used edge detection algorithms, Canny, Sobel and Prewitt, and two recent edge detection algorithms, QDPC, QDPA, DPC and MDPC, are compared with the proposed algorithm.
      In terms of peak SNR (PSNR) and similarity index measure (SSIM), the proposed algorithm presents  the superiority in  edge detection.
\end{enumerate}
\begin{figure}[t]
 \centering
 \includegraphics[height=1.75cm,width=5.25cm]{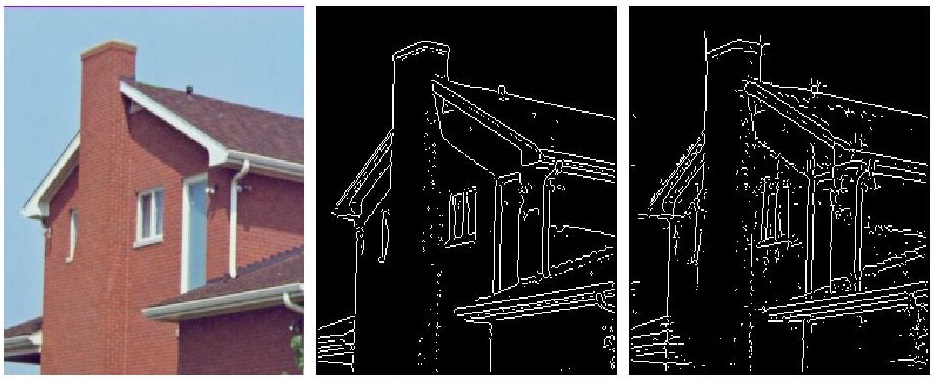}
  \caption{The noiseless House image (left). The edge maps obtained by IDZ gradient algorithm (middle) and the proposed algorithm (right).}
  \label{fig8}
\end{figure}
The rest of the paper is organized as follows. Section \ref{sec:Pre} recalls some preliminaries of the improved Di Zenzo gradient operator, quaternions, quaternion Fourier transform, quaternion Hardy space and quaternion analytic signal. Section \ref{sec:Proposed algorithm} presents the main result of the paper, it defines the novel algorithm for color-based edge detection of real-world images. Experimental results of the proposed algorithm are shown in Section \ref{sec:Exp}. Conclusions and discussions of the future work are drawn in Section \ref{sec:Con}.

%====================================================================
\section{Preliminaries}
\label{sec:Pre}
This part recalls some preparatory knowledge of the improved Di Zenzo gradient operator \cite{IDZ}, quaternions, quaternion Fourier transform \cite{B27}, quaternion Hardy space \cite{B26} and quaternion analytic signal \cite{B28} which will be used throughout the paper.

%===================================================================
\subsection{The improved Di Zenzo gradient operator}
\label{Pre-IDZ}
In the following, we recall the improved Di Zenzo gradient operator, namely the IDZ gradient operator, which will be combined with the quaternion Hardy filter  to establish the novel edge detection algorithm in the next section.

Let $f$  be an $M\times{N}$ color image that maps a point $(x_1,x_2)$  to a vector $(f_1(x_1,x_2),f_2(x_1,x_2),f_3(x_1,x_2))$.
Then the square of the variation of $f$ at the position $(x_1,x_2)$ with the distance $\gamma$ in the direction $\theta$ is given by
\begin{eqnarray}\label{df2}
\begin{aligned}
df^2& :=\|f(x_1+\gamma{\cos\theta,x_2+\gamma{\sin\theta})-f(x_1,x_2)}\|_2^2 \\
&\approx\sum\limits_{i=1}^{3}\left(\frac{\partial{f_i}}{\partial{x_1}}\gamma\cos\theta+
\frac{\partial{f_i}}{\partial{x_2}}\gamma\sin\theta\right)^2\\
&=\gamma^2f(\theta),
\end{aligned}
\end{eqnarray}
where
\begin{eqnarray}\label{ftheta}
 \begin{aligned}
 %\nonumber
f(\theta) :=&2\sum\limits_{i=1}^{3}\frac{\partial{f_i}}{\partial{x_1}}\frac{\partial{f_i}}{\partial{x_2}}\cos\theta\sin\theta\\
&+\sum\limits_{i=1}^{3}\left(\frac{\partial{f_i}}{\partial{x_1}}\right)^2\cos^2\theta+\sum\limits_{i=1}^{3}\left(\frac{\partial{f_i}}{\partial{x_2}}\right)^2\sin^2\theta.
\end{aligned}
\end{eqnarray}

\begin{figure}[t]
\centering
 \includegraphics[height=5.25cm,width=7cm]{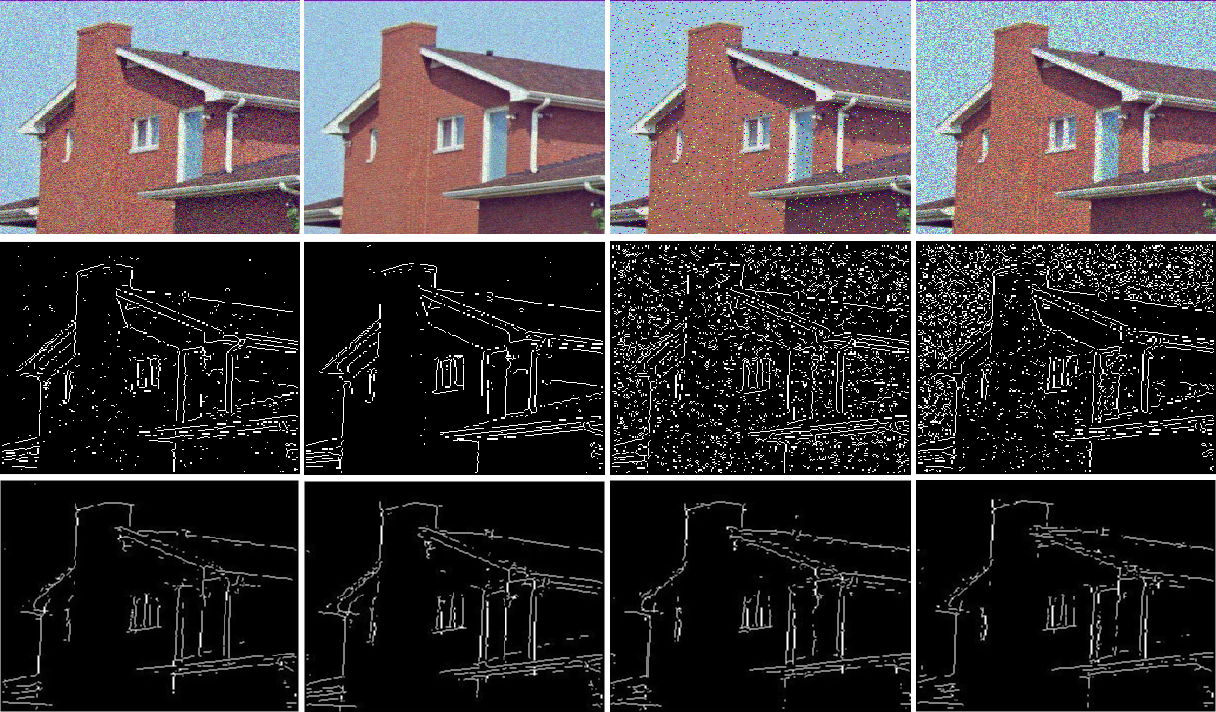}
  \caption{ The first row is the noisy House image with additive Gaussian noise, Poisson noise, Salt \& Pepper noise and Speckle noise from left to right. The second and third rows are the edge maps which are captured by IDZ  algorithm and the proposed algorithm, respectively.}
  \label{fig9}
\end{figure}
Let
\begin{eqnarray}\label{ABC}
 \left\{
 \begin{aligned}
&A:= \sum\limits_{i=1}^{3}\left(\frac{\partial{f_i}}{\partial{x_1}}\right)^2; \\
&B:= \sum\limits_{i=1}^{3}\left(\frac{\partial{f_i}}{\partial{x_2}}\right)^2; \\
&C:= \sum\limits_{i=1}^{3}\frac{\partial{f_i}}{\partial{x_1}}\frac{\partial{f_i}}{\partial{x_2}}.
\end{aligned}\right.
\end{eqnarray}
Then the gradient magnitude $f_{\max}$ of the improved Di Zenzo's gradient operator is given by
\begin{eqnarray}\label{fMAX}
\begin{aligned}
f_{\max}(\theta_{\max}):&=\max_{0\leq \theta \leq 2 \pi}{f(\theta)}\\
        =&\frac{1}{2}\bigg(A+C+\sqrt{(A-C)^2+(2B)^2}\bigg).
\end{aligned}
\end{eqnarray}
The gradient direction is defined as the value $\theta_{\max}$ that maximizes $f(\theta)$ over $0\leq \theta \leq 2 \pi$
\begin{eqnarray}\label{thetaMAX}
%\nonumber
\theta_{\max}:=&\mbox{sgn}(B)\arcsin\bigg(\frac{f_{\max}-A}{2f_{\max}-A-C}\bigg),
\end{eqnarray}
where
$(A-C)^2+B^2\neq0$,
$\mbox{sgn}(B)=\left\{
          \begin{array}{ll}
            1, & {B\geq0;} \\
            -1, & {B<0.}
          \end{array}
        \right.$
When $(A-C)^2+B^2=0$, $\nonumber\theta_{\max}$ is undefined.

It is important to note that the IDZ edge detector is designed to process real domain signals and don't possess the capability of de-noising.

\subsection{Quaternions}
\label{sec:Pre-Q}
As a natural extension of the complex space $\mathbb{C}$, the quaternion space $\mathbb{H}$ was first proposed by Hamilton in 1843 \cite{Hamilton}. A complex number consists of two components: one real part and one imaginary part. While a quaternion $q\in\mathbb{H}$ has four components, i.e., one real part and three imaginary parts
\begin{equation}\label{Eq.H}
q=q_0+q_1\mathbf{i}+q_2\mathbf{j}+q_3\mathbf{k},
\end{equation}
where $q_{n}\in\mathbb{R}, n=0,1,2,3$, and the basis elements $\{\mathbf{i},\mathbf{j},\mathbf{k}\}$ obey the Hamilton's multiplication rules
\begin{eqnarray}\label{Eq.Hamilton's multiplication rules}
\begin{aligned}
\mathbf{i}^2&=\mathbf{j}^2=\mathbf{k}^2=\mathbf{i}\mathbf{j}\mathbf{k}=-1;\\
\mathbf{i}\mathbf{j}&=\mathbf{k},\mathbf{j}\mathbf{k}=\mathbf{i},\mathbf{k}\mathbf{i}=\mathbf{j};\\
\mathbf{j}\mathbf{i}&=-\mathbf{k},\mathbf{k}\mathbf{j}=-\mathbf{i},\mathbf{i}\mathbf{k}=-\mathbf{j}.\\
\end{aligned}
\end{eqnarray}
%=========================================================================================
\begin{figure}[t]
 \centering
 \includegraphics[height=14cm,width=7cm]{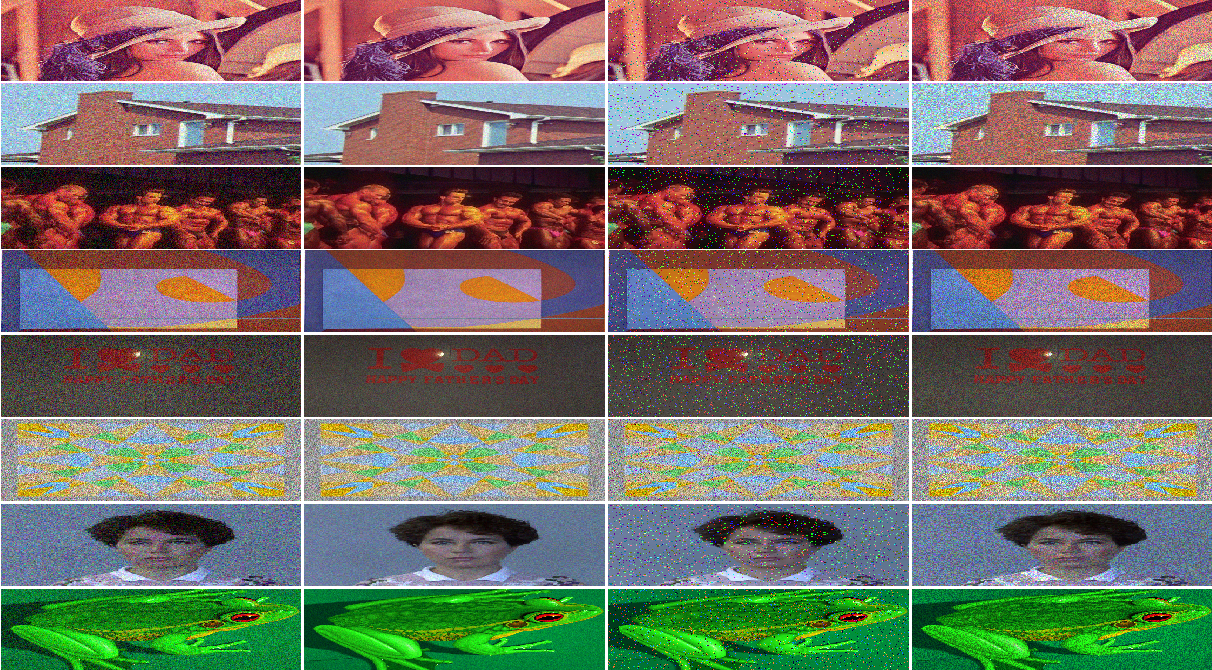}
  \caption{Noisy images. From left to right, they are obtained by adding: Gaussian noise, Poisson noise, Salt \& Pepper noise and Speckle noise to the original images (Fig. \ref{fig5}).}
  \label{fig6}
\end{figure}
Given a quaternion $q=q_0+q_1\mathbf{i}+q_2\mathbf{j}+q_3\mathbf{k}$, its quaternion conjugate is $\overline{q}:=q_0-q_1\mathbf{i}-q_2\mathbf{j}-q_3\mathbf{k}$. We write $\mathbf{Sc}(q):=\frac{1}{2}(q+\overline{q})=q_{0}$ and
$\mathbf{Vec}(q):=\frac{1}{2}(q-\overline{q})=q_1\mathbf{i}+q_2\mathbf{j}+q_3\mathbf{k}$,
which are the scalar and vector parts of $q$ , respectively. This leads to a modulus of $q\in\mathbb{H}$ defined by
\begin{equation}\label{Eq.modulus}
|q|:=\sqrt{q\overline{q}}=\sqrt{\overline{q}q}=\sqrt{q_0^{2}+q_1^{2}+q_2^{2}+q_3^{2}},
\end{equation}
where $q_{n}\in\mathbb{R}, n=0,1,2,3$.

By \eqref{Eq.H}, an $\mathbb{H}$-valued function $f:\mathbb{R}^{2}\rightarrow\mathbb{H}$ can be expressed as
\begin{eqnarray}\label{Eq.H-valued function}
\begin{aligned}
f(x_{1},x_{2})=&f_0(x_{1},x_{2})+f_1(x_{1},x_{2})\mathbf{i}+f_2(x_{1},x_{2})\mathbf{j}\\
&+f_3(x_{1},x_{2})\mathbf{k},
\end{aligned}
\end{eqnarray}
where $f_{n}:\mathbb{R}^{2}\rightarrow\mathbb{R}(n=0,1,2,3)$.
%================================================================

\begin{figure}[t]
 \centering
 \includegraphics[height=14.7cm,width=7cm]{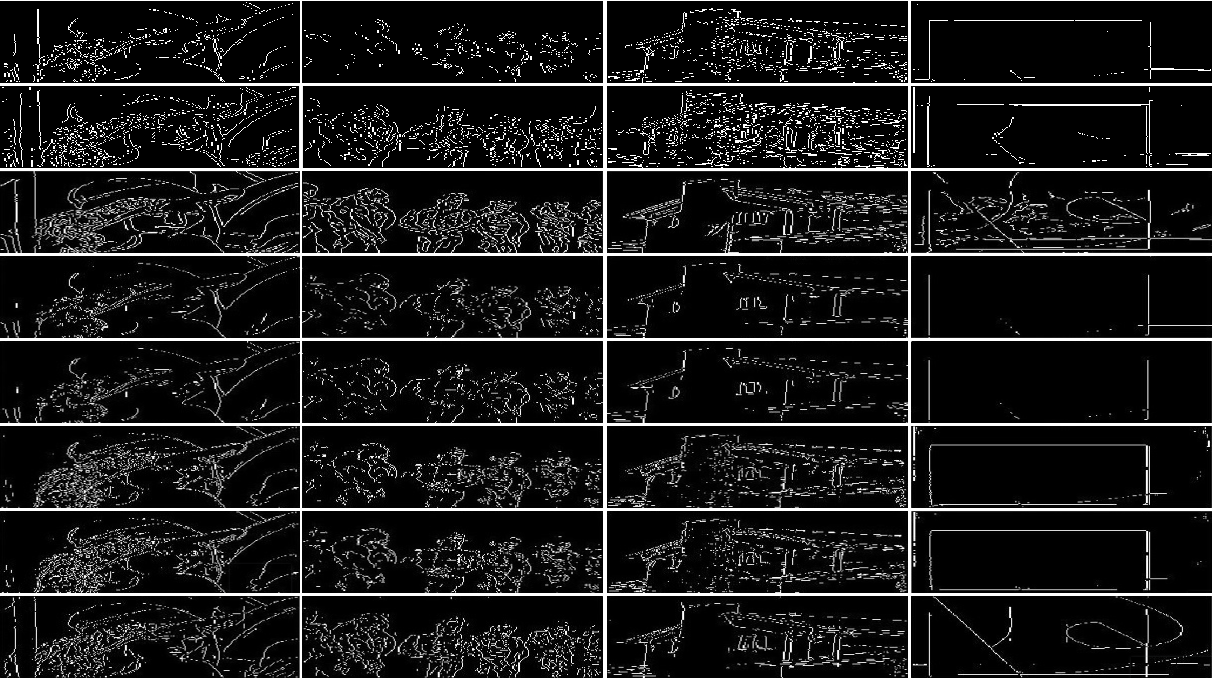}
  \caption{Edge maps of noiseless image Lena, Men , House and T1 generated by different edge detectors. From top to bottom: QDPC, QDPA, Canny, Sobel, Prewitt, DPC, MDPC and the proposed algorithm.}
  \label{6noiseless}
\end{figure}
%=======================================================
\subsection{Quaternion Fourier transform}
\label{sec:Pre-Fourier transform}

Suppose that $f$ is an absolutely integrable complex function defined on $\mathbb{R}$, then the Fourier transform \cite{stein} of $f$ is given by
\begin{eqnarray}\label{Eq.Fourier transform}
\widehat{f}(w):=\frac{1}{\sqrt{2\pi}}\int_{\mathbb{R}}f(x)e^{-iwx}d{x},
\end{eqnarray}
where $w$ denotes the angular frequency. Moreover, if  $\widehat{f}$ is an absolutely integrable complex function defined on $\mathbb{R}$ , then $f$ can be reconstructed by the Fourier transform of $f$ and is expressed by
\begin{eqnarray}\label{Eq.inverse Fourier transform}
f(x) =\frac{1}{\sqrt{2\pi}}\int_{\mathbb{R}}\widehat{f}(w)e^{iwx}dw.
\end{eqnarray}

%%%%%%%%%%%%%%%%%%%%%%%%%%%%%%%%%%%%
\begin{figure}[th]
 \centering
 \includegraphics[height=14.7cm,width=7cm]{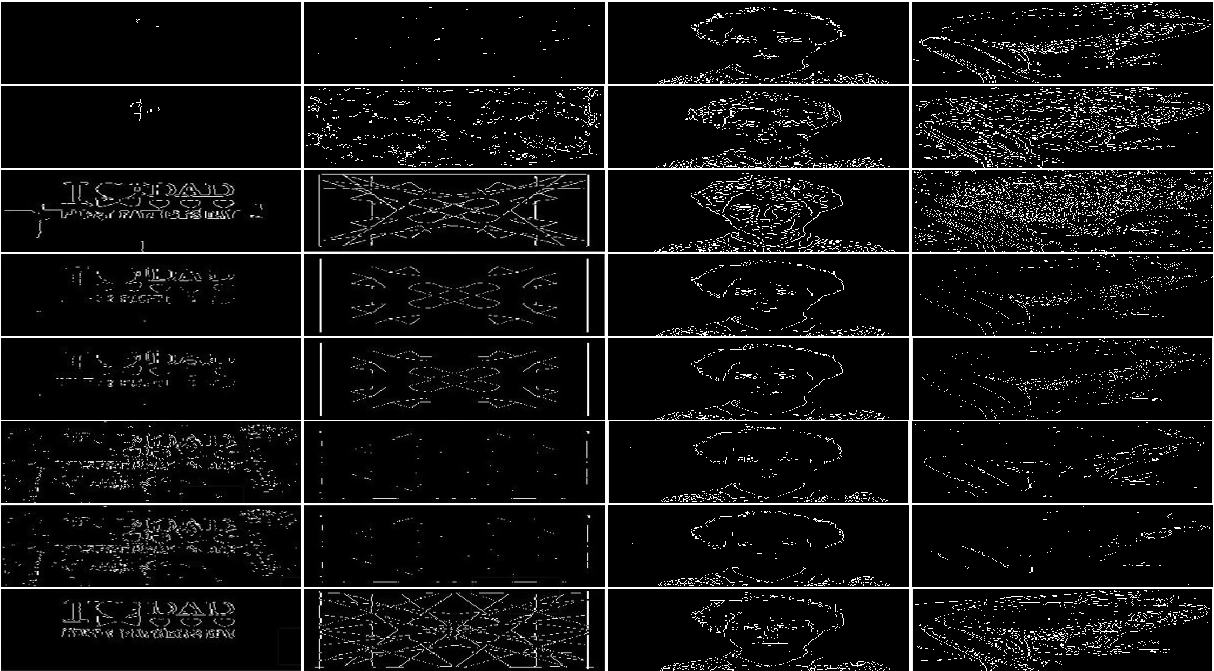}
  \caption{Edge maps of noiseless image T2, T3, Cara and Frog generated by different edge detectors. From top to bottom: QDPC, QDPA, Canny, Sobel, Prewitt, DPC, MDPC and  the proposed algorithm.}
  \label{T1-3}
\end{figure}
The quaternion Fourier transform, regarded as an extension of Fourier transform in quaternion domain, plays a vital role in grayscale image processing. The first definition of the quaternion Fourier transform was given in 1992 \cite{B31} and the first application to color images was discussed in 1996 \cite{B32}. It was recently applied to find the envelope of the image \cite{B33}. The application of quaternion Fourier transform on color images was discussed in \cite{B24}, \cite{B34}. The Plancherel and inversion theorems of quaternion Fourier transform in the square integrable signals class was established in \cite{B35}. Due to the non-commutativity of the quaternions, there are various types of quaternion Fourier transforms. In the following, we focus our attention on the two-sided quaternion Fourier transform (QFT).

%==================================================

\begin{figure}[ht]
 \centering
 \includegraphics[height=14.7cm,width=7cm]{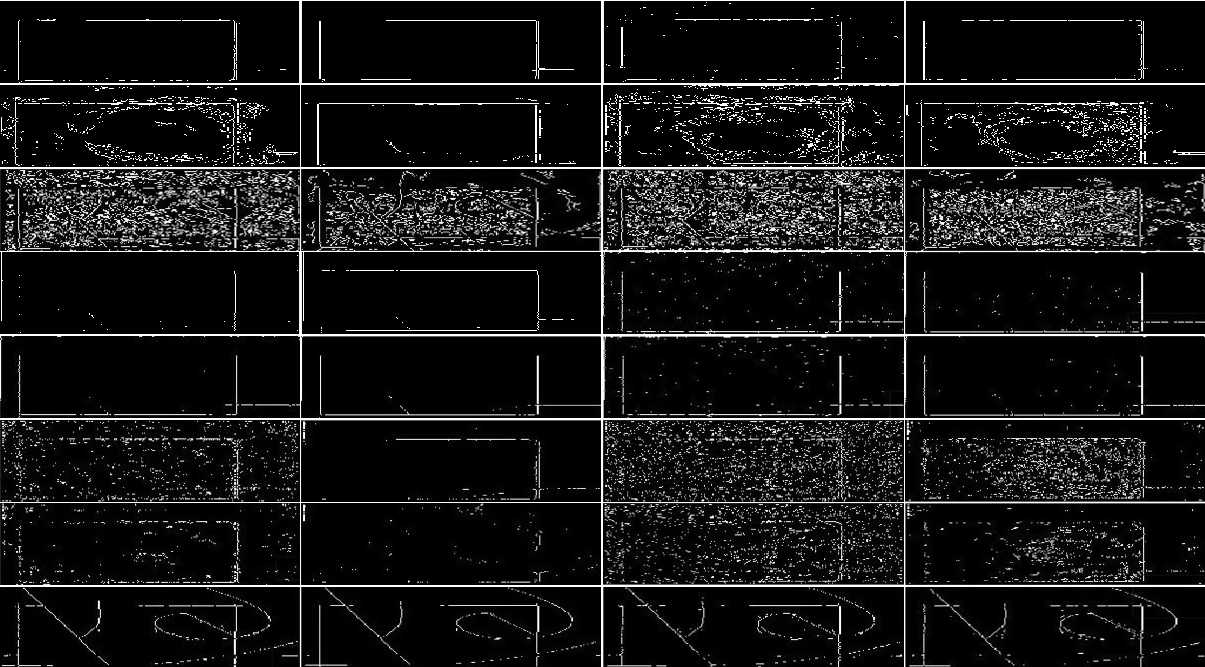}
  \caption{The edge maps of the noisy image T1 given by different algorithms. From top to bottom: QDPC, QDPA, Canny, Sobel, Prewitt, DPC, MDPC and the proposed algorithm. The first column to the last column show respectively the edge maps of T1 with Gaussian noise, Poisson noise, Salt \& Pepper noise and Speckle noise.}
  \label{nT1}
\end{figure}
Suppose that $f$ is an absolutely integrable $\mathbb{H}$-valued function defined on $\mathbb{R}^2$, then the continuous quaternion Fourier transform of $f$ is defined by
\begin{equation}\label{Eq.2-sided QFT}
(\mathcal{F}f)(w_1,w_2):=\frac{1}{2\pi}\int_{\mathbb{R}^{2}}e^{-\mathbf{i}w_1x_1}f(x_1,x_2)e^{-\mathbf{j}w_2x_2}d{x_1}d{x_2},
\end{equation}
where $w_l$ and $x_l$ denote the 2D angular frequency and 2D space ($l=1,2$), respectively.

Furthermore, if $f$ is an absolutely integrable $\mathbb{H}$-valued function defined on $\mathbb{R}^2$, then the continuous inverse quaternion Fourier transform (IQFT) of $f$ is defined by

\begin{equation}\label{Eq.inverse 2-sided QFT}
({\mathcal{F}}^{-1}f)(x_1,x_2):=\frac{1}{2\pi}\int_{\mathbb{R}^{2}}e^{\mathbf{i}w_1x_1}f(w_1,w_2)e^{\mathbf{j}w_2x_2}d{w_1}d{w_2},
\end{equation}
where $w_l$ and $x_l$ denote the 2D angular frequency and 2D space ($l=1,2$), respectively.

\begin{figure}[th]
 \centering
 \includegraphics[height=14.7cm,width=7cm]{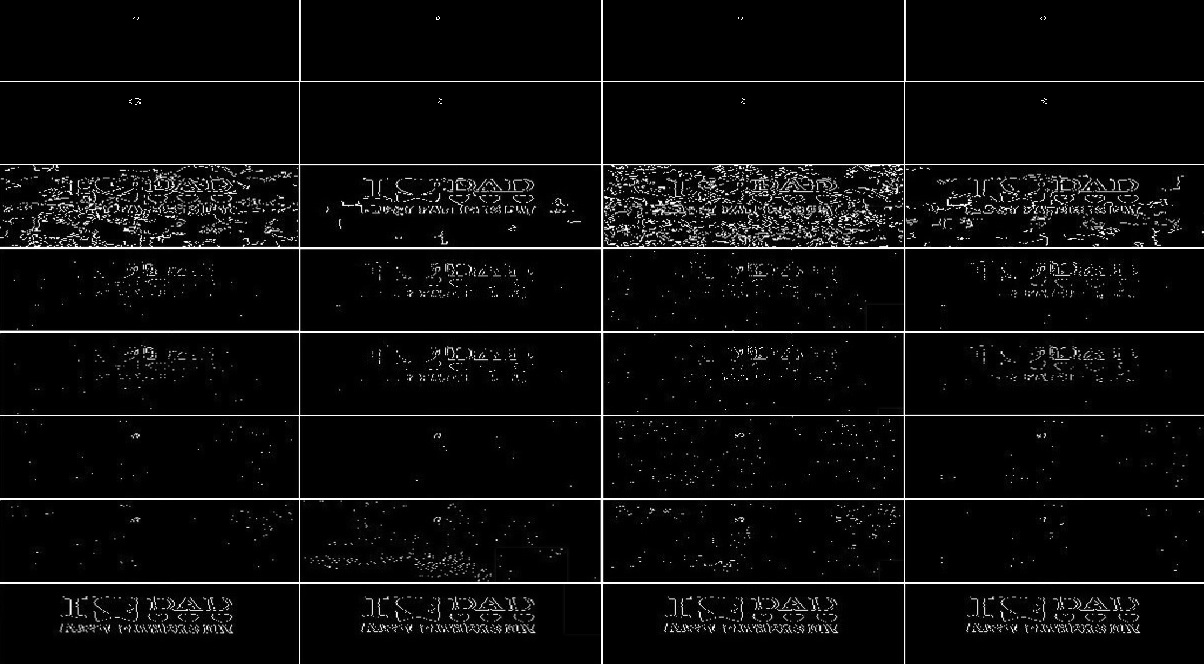}
  \caption{The edge maps of the noisy image T2 given by different algorithms. From top to bottom: QDPC, QDPA, Canny, Sobel, Prewitt, DPC, MDPC and the proposed algorithm. The first column to the last column show respectively the edge maps of T2 with Gaussian noise, Poisson noise, Salt \& Pepper noise and Speckle noise.}
  \label{nT2}
\end{figure}
The discrete quaternion Fourier transform (DQFT) and its inverse is introduced by Sangwine \cite{DQFT} in 1996. Suppose that the discrete array $f(m,n)$ is of dimension $M\times N$. The DQFT has the following form
\begin{equation}\label{DQFT}
\mathcal{F}_D[f](p,s):=\frac{1}{\sqrt{MN}}
{\sum_{m=0}^{M-1}}{\sum_{n=0}^{N-1}}
e^{-\mathbf{i}2\pi \frac{mp}{M}}f(m,n)e^{-\mathbf{j}2\pi \frac{ns}{N}}.
\end{equation}
And the inverse discrete quaternion Fourier transform (IDQFT) is
\begin{equation}\label{IDQFT}
f(m,n):=\frac{1}{\sqrt{MN}}
{\sum_{p=0}^{M-1}}{\sum_{s=0}^{N-1}}
e^{\mathbf{i}2\pi \frac{mp}{M}}\mathcal{F}_D[f](p,s)e^{\mathbf{j}2\pi \frac{ns}{N}}.
\end{equation}

%=============================================================
\subsection{Quaternion Hardy space}
\label{sec:Quaternion Hardy space}

Let $ {\mathbb{C}}:= \{z|z=x+si, x, s \in \mathbb{R}\}$ be the complex plane and a subset of ${\mathbb{C}}$ is defined by ${\mathbb{C}}^{+}:= \{z|z=x+si,
x, s \in \mathbb{R},s>0\}$, namely upper half complex plane.
\begin{figure}[th]
 \centering
 \includegraphics[height=14.7cm,width=7cm]{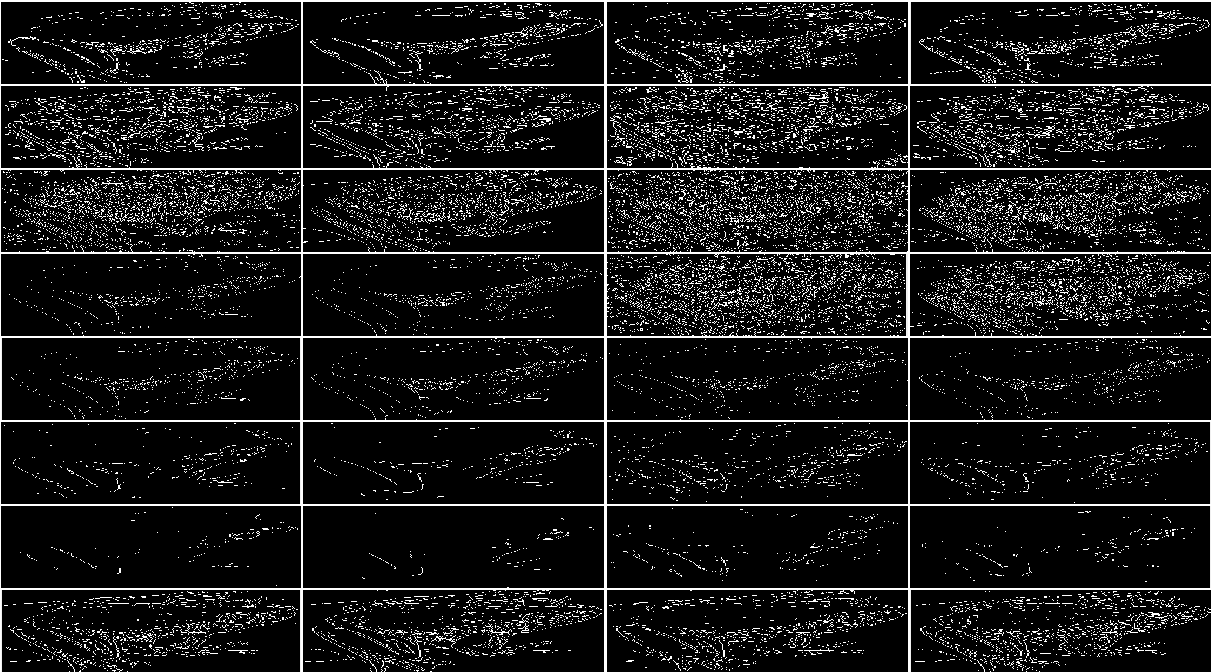}
  \caption{The edge maps of the noisy image Frog given by different algorithms. From top to bottom: QDPC, QDPA, Canny, Sobel, Prewitt, DPC, MDPC and the proposed algorithm. The first column to the last column show respectively the edge maps of Frog with Gaussian noise, Poisson noise, Salt \& Pepper noise and Speckle noise.}
  \label{nFrog}
\end{figure}
The Hardy space ${\mathbf{H}}^2(\mathbb{C}^{+})$ on the upper half complex plane consists of functions $c$ satisfying the following conditions
\begin{eqnarray}\label{Eq.Hardy space}
 \left\{
 \begin{aligned}
&\frac{\partial}{\partial \overline{z}}c(z)=0;\\
&(\sup\limits_{s>0}\int_{\mathbb{R}} |c(x+si)|^{2}dx)^{\frac{1}{2}} < \infty.
\end{aligned}\right.
\end{eqnarray}

The generalization \cite{B26} to higher dimension is given as follows. Let $ \mathbb{C}_{\mathbf{i} \mathbf{j}}:= \{( z_1, z_2)|z_1=x_1+s_1\mathbf{i}, z_2=x_2+s_2\mathbf{j},
x_l, s_l \in \mathbb{R}, l=1,2\}$ and a subset of $\mathbb{C}_{\mathbf{i} \mathbf{j}}$ is defined by $ \mathbb{C}_{\mathbf{i} \mathbf{j}}^{+}:= \{( z_1, z_2)|z_1=x_1+ s_1\mathbf{i}, z_2=x_2+ s_2\mathbf{j},x_l, s_l \in \mathbb{R}, s_l> 0, l=1,2\}.$
The quaternion Hardy space $\mathbf{Q}^2(\mathbb{C}_{\mathbf{i} \mathbf{j}}^+)$ consists of all functions satisfying the following conditions
%=========================================================================
\begin{figure}[th]
 \centering
 \includegraphics[height=5.5cm,width=7cm]{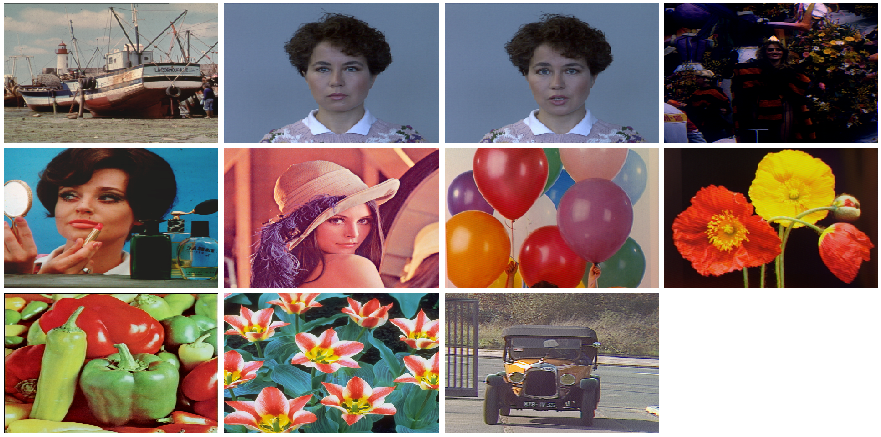}
  \caption{The test images. These images are randomly selected from the public image dataset (http://decsai.ugr.es/cvg/dbimagenes/).}
  \label{test11}
\end{figure}
\begin{eqnarray}\label{QHS conditions}
 \left\{
 \begin{aligned}
 &\frac{\partial}{\partial \overline{z_{1}}}h(z_1, z_2)=0;\\
 &h(z_1, z_2)\frac{\partial}{\partial \overline {z_{2}}}= 0;\\
 &(\sup\limits_{\substack{s_1>0 \\
 s_2>0}}\int_{\mathbb{R}^2} |h(x_1+ s_1\mathbf{i}, x_2+s_2\mathbf{j} )|^{2}dx_{1}dx_{2})^{\frac{1}{2}}  < \infty,
  \end{aligned}\right.
\end{eqnarray}
where $\frac{\partial}{\partial \overline{z_{1}}}:=\frac{\partial}{\partial {x_{1}}}+\mathbf{i}\frac{\partial}{\partial {s_{1}}}$, $\frac{\partial}{\partial \overline{z_{2}}}:=\frac{\partial}{\partial {x_{2}}}+\mathbf{j}\frac{\partial}{\partial {s_{2}}}$.

%=================================================================
\subsection{Quaternion analytic signal}
In the following, we review the concept of analytic signal. Given a real signal $f$, combined with its own Hilbert transform, then the analytic signal of $f$ is given by
\begin{equation}\label{Eq.1D analytic signal}
f_a(x) :=f(x)+i\mathcal{H}[f](x),   x\in\mathbb{R},
\end{equation}

where $\mathcal{H}[f]$ denotes the Hilbert transform of $f$ and is defined by
\begin{equation}\label{Eq.Hilbert transform}
\mathcal{H}[f](x) :=\frac{1}{\pi} \lim\limits_{\varepsilon\to 0^+} \int_{\varepsilon \leq |x-s|} \frac{f(s)}{x-s}ds.
\end{equation}
The Fourier transform of an analytic $f_a$ defined in (\ref{Eq.1D analytic signal}) is given by
\begin{equation}\label{fa}
\widehat{f_a}(w)=\left(1+\mbox{sgn}(w)\right)\widehat{f}(w),
 \end{equation}
 where $w\in\mathbb{R}$.

A natural extension of the analytic signal from 1D to 4D space in the quaternion setting is called quaternion analytic signal. It was proposed by B\"ulow and Sommer \cite{Hypercomplex signals Sommer} using partial and total Hilbert transform associated with QFT. Given a 2D quaternion valued signal $f$, combined with its own quaternion partial and total Hilbert transform, then we get a quaternion analytic signal $f_q$ \cite{Hypercomplex signals Sommer} as follows

\begin{equation}\label{QanalyticS}
\begin{aligned}
f_q(x_1,x_2):=&f(x_1,x_2)+\mathbf{i}\mathcal{H}_{x_1}[f](x_1)\\
&+\mathcal{H}_{x_2}[f](x_2)\mathbf{j}+\mathbf{i}\mathcal{H}_{{x_1}{x_2}}[f](x_1,x_2)\mathbf{j},
\end{aligned}
\end{equation}
where
\begin{eqnarray}
\begin{aligned}
&\mathcal{H}_{x_1}[f](x_1):=
\frac{1}{\pi} \lim\int\frac{f(t_1,x_2)}{x_1-t_1}dt_1, \label{deHx1}\\
&\mathcal{H}_{x_2}[f](x_2):=
\frac{1}{\pi} \lim\int\frac{f(x_1,t_1)}{x_2-t_1}dt_1, \label{deHx2}\\
\end{aligned}
\end{eqnarray}
are the quaternion partial Hilbert transform of $f$ along the $x_1$-axis, $x_2$-axis, respectively. While
\begin{equation}\label{deHx1,x2}
\begin{aligned}
\mathcal{H}_{{x_1}{x_2}}[f](x_1,x_2):=
\frac{1}{\pi} \lim\int\frac{f(t_1,t_2)}{(x_1-t_1)(x_2-t_2)}dt_1dt_2,
\end{aligned}
\end{equation}
is the quaternion total Hilbert transform along the $x_1$ and $x_2$ axes.
By  direct computation, the quaternion Fourier transform of quaternion analytic signal   is given by
\begin{equation}\label{fq}
\begin{aligned}
(\mathcal{F}{f_q})(w_1,w_2)=&[1+\mbox{sgn}(w_1)][1+\mbox{sgn}(w_2)]\\
&(\mathcal{F}{f})(w_1,w_2).
\end{aligned}
\end{equation}

%==========================================================
\section{Proposed algorithm}
\label{sec:Proposed algorithm}
In this section, we introduce our new color edge detection algorithm. To begin with, the definition of quaternion Hardy filter is presented.
\subsection{Quaternion Hardy filter}

The quaternion analytic signal $f_q$ can be regarded as the output signal of a filter with input $f$. The system function of this filter is
\begin{equation}\label{filter1}
H_1(w_1,w_2):=[1+\mbox{sgn}(w_1)][1+\mbox{sgn}(w_2)].
\end{equation}
In this paper, we use a novel filter, named {\it quaternion Hardy filter (QHF)}, to construct a high-dimensional analytic signal.  The system function of QHF is defined by
\begin{equation}\label{filter2}
\begin{aligned}
H(w_1,w_2,s_1,s_2):=&[1+\mbox{sgn}(w_1)][1+\mbox{sgn}(w_2)]\\
&e^{-\mid{w_1}\mid{s_1}}e^{-\mid{w_2}\mid{s_2}},
\end{aligned}
\end{equation}
where $s_1\geq0, s_2\geq0$ are parameters of the system function. The factors $(1+\mbox{sgn}(w_1))(1+\mbox{sgn}(w_2))$ and $e^{-\mid{w_1}\mid{s_1}}e^{-\mid{w_2}\mid{s_2}}$ play different roles in quaternion Hardy filter. The former performs Hilbert transform on the input signal, while the later plays a role of suppressing the high-frequency. On the one hand, the Hilbert transform operation can  selectively emphasize the edge feature of
an input object. On the other hand, the low-pass filtering can improve the ability of noise immunity for the QHF. It can be seen that as increase with $s_1, s_2$, the effect of inhibiting for the high frequency becomes more obvious. In particular, if $s_1=s_2=0$, then $e^{-\mid{w_1}\mid{s_1}}e^{-\mid{w_2}\mid{s_2}}=1$, it follows that
\begin{equation}\label{filter2=0}
H(w_1,w_2,0,0)=H_1(w_1,w_2),
\end{equation}
which means that there is no effect in high frequency inhibiting.

Parameters $s_1$ and $s_2$ play the role of low-pass filtering in vertical and horizontal directions, respectively. When the signal frequencies in the two directions are similar, then $s_1$ and $s_2$ can be set to the same value. If the signal frequencies in these two directions are different, then $s_1$ and $s_2$ should be different. For example, if the horizontal noise in the image is large, the value of $s_2$ should be set larger to enhance the anti-noise ability in that direction.
%Parameters $s_1$ and $s_2$ make a low pass in two vertical directions. When the signal frequencies in these two directions are close, $s_1$ and $s_2$ can get the same. If the signals in these two directions are different, then $s_1$ and $s_2$ should be different. For example, if the transverse noise in a picture is larger, the value of $s_2$ should also be larger to enhance the anti-noise ability in this direction.
This means that the QHF is very general and flexible, and it can solve many problems that can't be solved well by quaternion analytic signal.
\label{sec:Exp-Quant}
\begin{table*}[t]
\caption{The PSNR comparison values of  Lena, Men,  House and T1.
 Types of noise: I- Gaussian noise, II- Poisson noise, III- Salt \& Pepper noise and IV- Speckle noise.}
\label{tabP1}
\centering
\begin{tabular}{ccccccccccc}
\hline
& &QDPC &QDPA     &Canny   &Sobel&Prewitt&MDPC  &DPC &IDZ &Ours\\
\hline
\multirow{4}*{LENA}
&\uppercase\expandafter{\romannumeral1}
  &58.1631 &56.8276 	&56.6985	&64.7819	&64.6047	        &57.6375	&57.6707 &53.9715 &\textbf{64.8622}\\
&\uppercase\expandafter{\romannumeral2}
  &58.1378 &56.9719 	&57.5531	&68.2679	&\textbf{68.1332}	&59.9609	&60.0363 &56.0557 &65.8570\\
&\uppercase\expandafter{\romannumeral3}
  &58.1625 &56.8949 	&55.8291	&62.4449	&62.6426	        &57.7647	&57.6952 &54.2721 &\textbf{63.0016}\\
&\uppercase\expandafter{\romannumeral4}
  &58.1266 &58.8760 	&56.4122	&64.4007	&64.3951	        &58.8433	&58.7682 &53.8232 &\textbf{64.6239}\\
\hline					
\multirow{4}*{MEN} &\uppercase\expandafter{\romannumeral1}
&58.8880 &57.3084 	&59.1716	&62.7258	        &62.7488	&58.7690	&58.8037 &53.7155 &\textbf{62.9965}\\
&\uppercase\expandafter{\romannumeral2}
&58.8770 &57.4573 	&63.4944	&\textbf{66.3657}	&66.2739	&60.9465	&61.0934 &59.2063 &65.2541\\
&\uppercase\expandafter{\romannumeral3}
&58.8746 &57.1006 	&57.3177	&58.9356	        &59.2481	&58.2379	&58.2515 &54.4117 &\textbf{62.0697}\\
&\uppercase\expandafter{\romannumeral4}
&58.8770 &57.4397 	&60.6136	&62.9773	        &62.9611	&60.3605	&60.4286 &56.3922 &\textbf{63.0900}\\
\hline							
\multirow{4}*{HOUSE}&\uppercase\expandafter{\romannumeral1}
&59.1106 &56.6819 	&56.3984	&64.9506	&\textbf{65.1998}	&57.7217	&57.7132 &54.0252 &61.4998\\
&\uppercase\expandafter{\romannumeral2}
&59.1389 &56.9562 	&63.4558	&67.6564	&\textbf{67.9324}	&60.0384	&60.1424 &56.0046 &63.4681\\
&\uppercase\expandafter{\romannumeral3}
&59.0990 &56.5921 	&56.7973	&62.9551	&\textbf{63.1633}	&58.0498	&57.9419 &54.3644 &61.0881\\
&\uppercase\expandafter{\romannumeral4}
&59.0520 &56.2835 	&56.0536	&63.8801	&\textbf{64.0296}	&58.2135	&58.0122 &53.7992 &61.0322\\
\hline
\multirow{4}*{Image T1}
&\uppercase\expandafter{\romannumeral1}
&64.9126 &62.7373 	&54.6441	&71.3263	&72.8126	&64.3338	&61.0803 &53.9459 &\textbf{69.7830}\\
&\uppercase\expandafter{\romannumeral2}
&65.2066 &62.2319 	&57.7629	&78.5871	&77.5450	&68.2203	&69.1860 &57.4738 &\textbf{74.1626}\\
&\uppercase\expandafter{\romannumeral3}
&65.2066 &62.7854 	&54.8312	&67.8883	&68.4996	&59.1397	&57.6659 &54.3844 &\textbf{68.8984}\\
&\uppercase\expandafter{\romannumeral4}
&65.0181 &62.4863 	&55.7384	&71.3680	&72.0796	&62.1912	&59.4687 &53.8051 &\textbf{70.2784}\\
\hline					
\end{tabular}
\end{table*}

\begin{table*}[th]
\caption{The PSNR comparison values of T2, T3, Cara and Frog.
 Types of noise: I- Gaussian noise, II- Poisson noise, III- Salt \& Pepper noise and IV- Speckle noise.}
\label{tabP2}
\centering
\begin{tabular}{cccccccccc}
\hline
&   &QDPC &QDPA  &Canny   &Sobel &Prewitt  &DPC &MDPC &Ours\\
\hline	
\multirow{4}*{Image T2}
&\uppercase\expandafter{\romannumeral1}	&82.3162 &76.5637 &57.9033	&67.5161  &62.0191	&61.7502 &61.6269	 &\textbf{67.7956}\\
&\uppercase\expandafter{\romannumeral2}	&82.1459 &79.1356 &66.8262	&70.8302  &63.8282  &61.8739 &60.8068	 &\textbf{70.8856}\\
&\uppercase\expandafter{\romannumeral3}	&77.6624 &70.3738 &55.8679	&66.7100  &60.4750  &61.2672 &61.0374	 &\textbf{66.8058}\\
&\uppercase\expandafter{\romannumeral4}	&82.3162 &75.1562 &61.7685	&68.6092  &61.7823	&61.8209 &61.7823	 &\textbf{68.7155}\\
\hline								
\multirow{4}*{Image T3}
&\uppercase\expandafter{\romannumeral1}	&75.3200 &59.8140 &53.3384	&61.5886  &61.6987	&62.8951 &62.9903	 &\textbf{62.9955}\\
&\uppercase\expandafter{\romannumeral2}	&75.3265 &60.648 &54.0415	&62.2696  &62.2132	&66.3745 &66.1757	 &\textbf{64.3923}\\
&\uppercase\expandafter{\romannumeral3}	&74.4772 &59.1910 &53.7885	&58.5834  &58.4737	&63.8012 &63.8826	 &\textbf{61.8965}\\
&\uppercase\expandafter{\romannumeral4}	&74.7422 &58.5271 &53.0448	&61.7655  &61.8349	&62.0452 &61.8975	 &\textbf{63.0900}\\
\hline
\multirow{4}*{Image Cara}
&\uppercase\expandafter{\romannumeral1}	&62.4899 &59.2739 &55.6093	&64.7180  &64.7514	&64.1129 &62.9773	 &\textbf{65.9913}\\
&\uppercase\expandafter{\romannumeral2}	&63.0755 &61.1621 &60.2663	&66.1169  &66.3480	&64.3532 &63.4694	 &\textbf{66.7686}\\
&\uppercase\expandafter{\romannumeral3}	&60.7350 &57.7018 &55.2998	&55.0236  &63.4468	&63.3421 &61.7183	 &\textbf{63.4500}\\
&\uppercase\expandafter{\romannumeral4}	&62.4128 &59.9237 &55.4111	&55.4800  &64.7758	&64.0322 &62.9250	 &\textbf{64.8002}\\
\hline
\multirow{4}*{Image Frog}
&\uppercase\expandafter{\romannumeral1}	&59.5452 &57.6452 &57.9129 &63.3546	&64.3588 &60.8068 &59.8769	  &\textbf{64.3217}\\
&\uppercase\expandafter{\romannumeral2}	&60.0641 &59.0153 &61.1945 &65.5119	&65.5584 &60.8961 &60.2096	  &\textbf{65.8411}\\
&\uppercase\expandafter{\romannumeral3}	&58.6196 &56.5482 &56.3742 &56.3881	&63.3670 &60.3883 &59.1801	  &\textbf{63.8809}\\
&\uppercase\expandafter{\romannumeral4}	&59.2447 &57.5758 &57.8310 &57.2878	&64.0090 &60.6560 &59.6291	  &\textbf{63.9898}\\
\hline
\end{tabular}
\end{table*}

For any fixed $s_1, s_2 \geq 0$, denote by $f_H(x_1,x_2,s_1,s_2)$ the output signal of the QHF with input $f(x_1,x_2)$. By the definition, we have
\begin{equation}\label{QFT f_H}
\begin{aligned}
(\mathcal{F}{f_H})(w_1,w_2,s_1,s_2)= [1+\mbox{sgn}(w_1)][1+\mbox{sgn}(w_2)]\\
e^{-\mid{w_1}\mid{s_1}}e^{-\mid{t_2}\mid{s_2}}(\mathcal{F}{f})(w_1,w_2).
\end{aligned}
\end{equation}
Here, the QFT acts on the variable $x_1,x_2$. We will show that as a function of $z_1=x_1+\mathbf{i}s_1$ and $ z_2=x_2+\mathbf{j}s_2$, $f_H$ belongs to the quaternion Hardy space $\mathbf{Q}^2(\mathbb{C}_{\mathbf{i} \mathbf{j}}^+)$.
\begin{theorem}
Let $f\in L^2(\mathbb{R}^2, \mathbb{H})$ and  $f_H$ be given above. Then   $f_H\in\mathbf{Q}^2(\mathbb{C}_{\mathbf{i} \mathbf{j}}^+)$ .
\end{theorem}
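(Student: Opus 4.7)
The plan is to push the defining QFT relation through the inverse QFT to obtain a clean kernel representation for $f_H$, from which all three Hardy conditions become essentially mechanical to check. Since the sign factors $[1+\mathrm{sgn}(w_1)][1+\mathrm{sgn}(w_2)]$ restrict the inversion integral to $w_1,w_2>0$ (with value $4$ there), and since the real scalars $e^{-w_1 s_1}, e^{-w_2 s_2}$ commute with every quaternion, they can be absorbed into the exponential kernels via $e^{\mathbf{i}w_1 x_1}e^{-w_1 s_1} = e^{\mathbf{i}w_1 z_1}$ (where $z_1=x_1+\mathbf{i}s_1$), and symmetrically $e^{\mathbf{j}w_2 x_2}e^{-w_2 s_2} = e^{\mathbf{j}w_2 z_2}$ on the right with $z_2 = x_2+\mathbf{j}s_2$. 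The outcome is the explicit representation
$$f_H(z_1,z_2) = \frac{2}{\pi}\int_0^{\infty}\!\!\int_0^{\infty} e^{\mathbf{i}w_1 z_1}(\mathcal{F}f)(w_1,w_2)\,e^{\mathbf{j}w_2 z_2}\,dw_1\,dw_2.$$

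With this in hand, I would verify the two Cauchy--Riemann-type conditions by differentiating under the integral sign. A direct computation gives $(\partial_{x_1}+\mathbf{i}\partial_{s_1})e^{\mathbf{i}w_1 z_1} = \mathbf{i}w_1 e^{\mathbf{i}w_1 z_1}-\mathbf{i}w_1 e^{\mathbf{i}w_1 z_1} = 0$, yielding the first condition. The second is analogous but requires $\partial/\partial\overline{z_2}$ to act on the right as post-multiplication by $\mathbf{j}$; the kernel $e^{\mathbf{j}w_2 z_2}$ lies in the commutative subalgebra generated by $\{1,\mathbf{j}\}$, so $\mathbf{j}$ passes through it and the two derivative contributions cancel. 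The $L^2$ boundedness condition then follows from Plancherel's theorem for the two-sided QFT (the cited reference): the real-scalar multiplier in $\mathcal{F}f_H$ is bounded above by $4$ uniformly in $s_1,s_2\geq 0$, so $\|f_H(\cdot,\cdot,s_1,s_2)\|_{L^2}^2 \leq 16\,\|\mathcal{F}f\|_{L^2}^2 = 16\,\|f\|_{L^2}^2$, and taking the supremum preserves the bound.

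The principal obstacle is bookkeeping around quaternionic non-commutativity. One must be scrupulous about on which side real scalars sit when commuting them through $(\mathcal{F}f)(w_1,w_2)$, about the right-acting convention for $h\,\partial/\partial\overline{z_2}$ (where $\mathbf{j}$ is applied from the right to $\partial h/\partial s_2$), and about using Plancherel in its genuine two-sided QFT form. A secondary technical point is justifying differentiation under the integral sign for general $f\in L^2(\mathbb{R}^2,\mathbb{H})$; this can be dealt with by first establishing the theorem for a dense class of Schwartz signals, where the kernels $e^{\mathbf{i}w_1 z_1}$ and $e^{\mathbf{j}w_2 z_2}$ and their derivatives decay uniformly on compact $s_l \geq \delta > 0$, and then extending to $L^2$ by the uniform bound obtained in condition (iii).
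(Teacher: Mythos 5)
Your proposal is correct and follows essentially the same route as the paper: invert the QFT to get an explicit kernel representation, check the two Cauchy--Riemann-type conditions by differentiating under the integral (where the sign factors force the cancellation), and obtain the uniform $L^2$ bound from the multiplier bound $4$ together with the QFT Plancherel identity, yielding the constant $16$. Your added remark about justifying differentiation under the integral via a dense subclass is a point the paper glosses over, but it does not change the argument.
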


\begin{proof}
Using inverse quaternion Fourier transform defined by Eq. (\ref{Eq.inverse 2-sided QFT}), we have that
\begin{equation}\label{proof}
\begin{aligned}
f_H(x_1,x_2,s_1,s_2)=&\frac{1}{2\pi}\int_{\mathbb{R}^{2}}e^{\mathbf{i}w_1x_1}[1+\mbox{sgn}(w_1)]\\
&[1+\mbox{sgn}(w_2)]e^{-\mid{w_1}\mid{s_1}}e^{-\mid{w_2}\mid{s_2}}\\
&(\mathcal{F}{f})(w_1,w_2)e^{\mathbf{j}w_2x_2}d{w_1}d{w_2}.
\end{aligned}
\end{equation}
Taking the derivative of $f_H$ with respect to $\overline{z_{1}}$, we get
\begin{equation}\label{proof1}
\begin{aligned}
&\frac{\partial}{\partial \overline{z_{1}}}f_H(x_1,x_2,s_1,s_2)\\
=&[\frac{\partial}{\partial {x_{1}}}+\mathbf{i}\frac{\partial}{\partial {s_{1}}}]f_H(x_1,x_2,s_1,s_2)\\
=&\frac{1}{2\pi}\int_{\mathbb{R}^{2}}\mathbf{i}(w_1-|w_1|)
e^{\mathbf{i}w_1x_1}[1+\mbox{sgn}(w_1)][1+\mbox{sgn}(w_2)]\\
&e^{-\mid{w_1}\mid{s_1}}e^{-\mid{w_2}\mid{s_2}}(\mathcal{F}{f})(w_1,w_2)e^{\mathbf{j}w_2x_2}d{w_1}d{w_2}\\
=&0.
\end{aligned}
\end{equation}
The last equality holds since the integrand vanishes identically. Similarly,
\begin{equation}\label{proof2}
\begin{aligned}
&f_H(x_1,x_2,s_1,s_2)\frac{\partial}{\partial \overline{z_{2}}}\\
=&\frac{\partial}{\partial {x_{2}}}f_H(x_1,x_2,s_1,s_2)+\frac{\partial}{\partial {s_{2}}}f_H(x_1,x_2,s_1,s_2)\mathbf{j}\\
=&\frac{1}{2\pi}\int_{\mathbb{R}^{2}}e^{\mathbf{i}w_1x_1}[1+\mbox{sgn}(w_1)][1+\mbox{sgn}(w_2)]\\
&e^{-\mid{w_1}\mid{s_1}}e^{-\mid{w_2}\mid{s_2}}(\mathcal{F}{f})(w_1,w_2)\mathbf{j}(w_2-|w_2|)e^{\mathbf{j}w_2x_2}d{w_1}d{w_2}\\
=&0.
\end{aligned}
\end{equation}
For any fixed $s_1>0, s_2>0$, from (\ref{QFT f_H}) we can obtain that
\begin{equation}\label{proof31}
(\mathcal{F}{f_H})(w_1,w_2,s_1,s_2)\leq 4(\mathcal{F}{f})(w_1,w_2).
\end{equation}
According to the QFT Parseval's identity \cite{Cheng QFT}, we obtain that
\begin{equation}\label{proof32}
\begin{aligned}
&\int_{\mathbb{R}^{2}}|(\mathcal{F}{f_H})(w_1,w_2,s_1,s_2)|^2dw_1dw_2\\
=&\int_{\mathbb{R}^{2}} |f_H(x_1,x_2,s_1,s_2)|^2dx_1dx_2,
\end{aligned}
\end{equation}
\begin{equation}\label{proof33}
\begin{aligned}
&\int_{\mathbb{R}^{2}}|(\mathcal{F}{f})(w_1,w_2)|^2dw_1dw_2\\
=&\int_{\mathbb{R}^{2}} |f(x_1,x_2)|^2dx_1dx_2.
\end{aligned}
\end{equation}
 Using (\ref{proof31}), (\ref{proof32}) and (\ref{proof33}), a direct computation shows that
\begin{equation}\label{proof34}
\begin{aligned}
 &\sup\limits_{\substack{s_1>0 \\
 s_2>0}}\int_{\mathbb{R}^2} |f_H(x_1+ s_1\mathbf{i}, x_2+s_2\mathbf{j} )|^{2}dx_{1}dx_{2} \\
 &\leq   \sup\limits_{\substack{s_1>0 \\  s_2>0}} 16\int_{\mathbb{R}^{2}}|(\mathcal{F}{f})(w_1,w_2)|^2dw_1dw_2
 < \infty.
\end{aligned}
\end{equation}
The proof is complete.
\end{proof}

%===================================================
\subsection{Color edge detection algorithm}
\label{sec:algorithm}
In this section, the edge detector based on QHF  are described. The IDZ approach for edge detection based on the QHF consists in using (\ref{DQFT}), (\ref{IDQFT}) and (\ref{filter2}) to obtain the high-dimensional  analytic signal, and then use it as inputs for an appropriately defined  robust edge detection algorithm.
Let us now give the details of the quaternion Hardy filter based algorithm. They are divided by the following steps.

\begin{itemize}
  \item []
  {\bf{Step} 1}. Given an input digital color image of size $M\times N$, associate it with a $\mathbb{H}$-valued signal
  \begin{eqnarray}\label{f}
  f:=f_1\mathbf{i}+f_2\mathbf{j}+f_3\mathbf{k},
  \end{eqnarray}
  where $f_1,f_2$ and $f_3$ represent respectively the red, green and blue components of the given color image.
  \item []
  {\bf{Step} 2}. Compute the DQFT of the $f$ using equation (\ref{DQFT}). The result will be $\mathcal{F}_D[f]$.
  \item []
  {\bf{Step} 3}. For fixed $s_1>0, s_2>0$ (the values of parameters $s_1$ and $s_2$ of the original image ranged from 1.0 to 2.0, and those of the noisy image ranged from 1.0 to 8.0, for details please refer Table \ref{tab1-parameter-LENA}-\ref{tab1-parameter-FROG}), multiplying $\mathcal{F}_D[f]$ by the system function (\ref{filter2}) of the QHF. Then we obtain the DQFT of $f_H$ which has the following form
  \begin{equation}
  \begin{aligned}
  (\mathcal{F}_D{f_H})(w_1,w_2,s_1,s_2)= [1+\mbox{sgn}(w_1)]\\
  [1+\mbox{sgn}(w_2)]
  e^{-\mid{w_1}\mid{s_1}}e^{-\mid{t_2}\mid{s_2}}
  (\mathcal{F}_D{f})(w_1,w_2).
  \end{aligned}
  \end{equation}

  This is the most significant step in our algorithm, because it allows the advantages of QHF to be presented.

  \item []
  {\bf{Step} 4}. Compute the inverse DQFT for  $\mathcal{F}_D[f_H]$ by applying equation (\ref{IDQFT}), we obtain $f_H$.
  \item []
  {\bf{Step} 5}.  Extract the vector part of $f_H$, we obtain
  \begin{eqnarray}
  \mathbf{Vec}(f_H)=h_1\mathbf{i}+h_2\mathbf{j}+h_3\mathbf{k},
  \end{eqnarray}
  where $h_k$, $k=1,2,3$ are real-valued functions.
  In the following, we will operate IDZ algorithm based on $\mathbf{Vec}(f_H)$ instead of $f$.
  \item []
  {\bf{Step} 6}. Perform the IDZ gradient operator to $\mathbf{Vec}(f_H)$. Applying equation (\ref{ABC}), we obtain
  \begin{eqnarray}\label{ABC-h}
  \left\{
  \begin{aligned}
  &A= \sum\limits_{k=1}^{3}(\frac{\partial{h_k}}{\partial{x_1}})^2; \\
  &B= \sum\limits_{k=1}^{3}(\frac{\partial{h_k}}{\partial{x_2}})^2; \\
  &C= \sum\limits_{k=1}^{3}\frac{\partial{h_k}}{\partial{x_1}}\frac{\partial{h_k}}{\partial{x_2}},
  \end{aligned}\right.
  \end{eqnarray}

  then we substitute them into equation (\ref{fMAX}), obtain
  \begin{equation}\label{fMAX1}
    \mathbf{Vec}(f_H)_{\max}=\frac{1}{2}\bigg(A+C+\sqrt{(A-C)^2+(2B)^2}\bigg).
  \end{equation}

  \item []
  {\bf{Step} 7}. Finally, we obtain the processed result, edge map, by applying the nonmaxmum suppress.
  \label{algorithm}
\end{itemize}

%=======================================
\section{Experimental results}
\label{sec:Exp}
In this section, we shall demonstrate the effectiveness of the proposed algorithm for color image edge detection.
%%%%%%%%%%%%%%%%%%%%%%%%%%%%%%%%%%%%%%%%%%%%%%%%%%%%%%%%%%

%=========================================================================
Here both visual and quantitative analysis for edge detection are considered in our experiments. All experiments are programmed in Matlab R2016b.
To validate the effectiveness of the proposed method, we have carried out verification on many images, eight of which are shown in Fig. \ref{fig5}. The images are from the public image dataset (http://decsai.ugr.es/cvg/dbimagenes/), which has been used by previous researchers.  It consists of 805 test images with 3 different size scales. There are 3 and 6 classes in
color images and gray images, respectively.
Here, the Gaussian filter \cite{gaussian1, gaussian2} is applied to these algorithms (Canny, Sobel, Prewitt, DPC and MDPC), since they doesn't have the ability of resisting noise.
Digital images distorted with different types of noise such as I- Gaussian noise \cite{gaussian}, II- Poisson noise, III- Salt \& Pepper noise and IV- Speckle noise. The ideal noiseless (Fig. \ref{fig5}) and noisy images (Fig. \ref{fig6}) are both taken into account.
%Tables \ref{tab1-parameter-LENA} - \ref{tab1-parameter-FROG} ( \ref{sec:Exp-Quant} Quantitative analysis) summaries the parameter settings in terms of scale values \cite{Sommer} $y_1, y_2,$  $s$  and  $s_1, s_2,$ respectively for (QDPC, QDPA), (DPC, MDPC) algorithms and the proposed algorithm.
%=======================================
\begin{table*}[t]
\caption{The SSIM comparison values of Lena, Men,  House and T1.
 Types of noise: I- Gaussian noise, II- Poisson noise, III- Salt \& Pepper noise and IV- Speckle noise.}
\label{tabS1}
\centering
\begin{tabular}{cccccccccc}
\hline
 &  &QDPC &QDPA  &Canny   &Sobel  &Prewitt    &DPC &MDPC &Ours \\
\hline
\multirow{4}*{LENA}
&\uppercase\expandafter{\romannumeral1}	&0.4346 &0.3473 &0.6568	&{0.8041}	&{0.8011}	&0.2876 &0.2986	 &\textbf{0.8058}\\
&\uppercase\expandafter{\romannumeral2}	&0.4365 &0.3641 &0.8593	&{0.9165}	&{0.9124}	&0.6198 &0.6200	 &\textbf{0.9275}\\
&\uppercase\expandafter{\romannumeral3}	&0.4331 &0.3632 &0.5299	&{0.5406}	&{0.5611}	&0.1711 &0.1710	 &\textbf{0.7155}\\
&\uppercase\expandafter{\romannumeral4}	&0.4301 &0.3579 &0.5658	&{0.7604}	&{0.7613}	&0.4953 &0.4959	 &\textbf{0.7872}\\
\hline
\multirow{4}*{MEN}
&\uppercase\expandafter{\romannumeral1}	&0.4736 &0.3644 &{0.6850}	&0.6673	&{0.6722}	&0.4342 &0.4343	 &\textbf{0.6878}\\
&\uppercase\expandafter{\romannumeral2}	&0.4723 &0.3741 &0.8739	&{0.8575}	&{0.8572}	&0.6523 &0.6522	 &\textbf{0.8843}\\
&\uppercase\expandafter{\romannumeral3}	&{0.4629} &0.3379 &{0.3973}	&0.1668	&0.1726	&0.1375 &0.1376	 &\textbf{0.4669}\\
&\uppercase\expandafter{\romannumeral4}	&0.4730 &0.3744 &\textbf{0.7670}	&0.7200	&{0.7281}	&0.5633 &0.5677	 &{0.7463}\\
\hline
\multirow{4}*{HOUSE}
&\uppercase\expandafter{\romannumeral1}	&0.5728 &0.6699 &0.4250	&{0.8475}	&\textbf{0.8522} &0.2289 &0.2292 	&{0.8518}\\
&\uppercase\expandafter{\romannumeral2}	&0.6015 &0.7218 &0.8503	&{0.9183}	&{0.9134}	         &0.5101 &0.5106   &\textbf{0.9192}\\
&\uppercase\expandafter{\romannumeral3}	&0.3814 &{0.5899} &0.3759	&0.5767	&{0.5921}	         &0.1419 &0.1423 	&\textbf{0.6723}\\
&\uppercase\expandafter{\romannumeral4}	&0.4343 &0.6188 &0.3387	&{0.7502}	&\textbf{0.7756} &0.2691 &0.2672 	&{0.6707}\\
\hline
\multirow{4}*{Image T1}
&\uppercase\expandafter{\romannumeral1}  &0.8417 &0.7673 &0.1449	&\textbf{0.9562}       &{0.9309} &0.3935 &0.7128  &{0.9196}\\
&\uppercase\expandafter{\romannumeral2}  &0.8441 &0.7674 &0.4470	&{0.9270}	   &{0.9456} &0.9191 &0.8435	 &\textbf{0.9652}\\
&\uppercase\expandafter{\romannumeral3}  &{0.8440} &0.7617 &0.1227	&0.7497	        &{0.8045}	&0.0637 &0.1679	 &\textbf{0.8833}\\
&\uppercase\expandafter{\romannumeral4}  &0.8454 &0.7578 &0.3621	&{0.8970}	        &{0.9121}	&0.4338 &0.6074	 &\textbf{0.9139}\\
\hline
\end{tabular}
\end{table*}
%=======================================
\begin{table*}[th]
\centering
\caption{The SSIM comparison values for T2, T3, Cara and Frog.
 Types of noise: I- Gaussian noise, II- Poisson noise, III- Salt \& Pepper noise and IV- Speckle noise.}
\label{tabS2}
\begin{tabular}{cccccccccc}
\hline
&  &QDPC &QDPA   &Canny &Sobel &Prewitt   &DPC &MDPC   &Ours\\
\hline
\multirow{4}*{Image  T2}
&\uppercase\expandafter{\romannumeral1} &0.9957 &0.9815	&0.4350	&0.8217	&0.8314 &0.3271 &0.3204	   &\textbf{0.9248}\\
&\uppercase\expandafter{\romannumeral2} &0.9954 &0.9926	&0.9034	&0.9202	&0.9311 &0.5865 &0.5635	   &\textbf{0.9664}\\
&\uppercase\expandafter{\romannumeral3} &0.9674 &0.9428	&0.1942	&0.7720	&0.7880	&0.3206 &0.3153	   &\textbf{0.9098}\\
&\uppercase\expandafter{\romannumeral4} &0.9957 &0.9819	&0.7197	&0.8914	&0.8901	&0.3780 &0.3584	   &\textbf{0.9483}\\
\hline
\multirow{4}*{Image  T3}
&\uppercase\expandafter{\romannumeral1} &0.9456 &0.4588	&0.0827	&0.3983	&0.4001	&0.1116 &0.0964	   &\textbf{0.7560}\\
&\uppercase\expandafter{\romannumeral2} &0.9469 &0.5161	&0.1872	&0.6195	&0.6175	&0.3285 &0.3330	   &\textbf{0.8434}\\
&\uppercase\expandafter{\romannumeral3} &0.9324 &0.3999	&0.0675	&0.2326	&0.2430	&0.1729 &0.1745	   &\textbf{0.7192}\\
&\uppercase\expandafter{\romannumeral4} &0.9389 &0.3382	&0.0386	&0.3714	&0.3497	&0.0803 &0.0725	   &\textbf{0.6850}\\
\hline
\multirow{4}*{Image  Cara}
&\uppercase\expandafter{\romannumeral1}	&0.7624 &0.5932 &0.1991 &0.8394	&0.8499 &0.8041 &0.7584	  &\textbf{0.8500}\\
&\uppercase\expandafter{\romannumeral2}	&0.8024 &0.7220 &0.6606 &\textbf{0.8923}	&0.8907 &0.8069 &0.7953	  &0.8916\\
&\uppercase\expandafter{\romannumeral3}	&0.4524 &0.3382 &0.1682 &0.0828	&0.7135 &\textbf{0.7463} &0.6085	  &0.7302\\
&\uppercase\expandafter{\romannumeral4}	&0.7625 &0.6328 &0.2010 &0.1433	&0.8457 &0.7993 &0.7511	  &\textbf{0.7998}\\
\hline
\multirow{4}*{Image  Frog}
&\uppercase\expandafter{\romannumeral1}	&0.6200 &0.5090 &0.5309 &0.8025	&0.8185 &0.6271 &0.5614	  &\textbf{0.8220}\\
&\uppercase\expandafter{\romannumeral2}	&0.6490 &0.6194 &0.7770 &0.8396	&0.8510 &0.6384 &0.5960	  &\textbf{0.8548}\\
&\uppercase\expandafter{\romannumeral3}	&0.4830 &0.3636 &0.3343 &0.2443	&0.7139 &0.5718 &0.4464	  &\textbf{0.7231}\\
&\uppercase\expandafter{\romannumeral4}	&0.6128 &0.5276 &0.5572 &0.4034	&0.7939 &0.6116 &0.5416	  &\textbf{0.8003}\\
\hline
\end{tabular}
\end{table*}
%======================================================
\subsection{Visual comparisons}
\label{sec:Exp-Visual}
In terms of visual analysis, a color-based method IDZ and seven widely used and noteworthy methods QDPC, QDPA, Canny, Sobel, Prewitt, Differential Phase Congruence (DPC) and Modified Differential Phase Congruence (MDPC)  will be compared with our algorithm.

%=================================================================
\subsubsection{Color-based algorithm}
\label{sec:color}
In this part, we first compare the proposed algorithm with the IDZ gradient algorithm. In order to make the experiment more convincing, we used Gaussian filter before IDZ algorithm to achieve the effect of denoising.
Fig. \ref{fig8} presents the edge map of the noiseless House image, while Fig. \ref{fig9} presents the edge map of the House image corrupted with four different types of noise.
It can be seen from the second row of Fig. \ref{fig9} that IDZ gradient algorithm performs well in the first two images of the first line, while poorly in the last two images.
This illustrates that the IDZ gradient algorithm's limitations as a edge detector. The third row of Fig. \ref{fig9} shows the detection result of the proposed algorithm. It preserves details more clearly than the second row. It demonstrates that the proposed algorithm gives robust performance compared to that of the IDZ gradient algorithm.

%========================================================================
\subsubsection{Grayscale-based algorithms}
\label{sec:Grayscale}
%==============================================
\begin{table*}[th]
\centering
\caption{The PSNR comparison values for Fig. \ref{test11}.
 Types of noise: I- Gaussian noise, II- Poisson noise, III- Salt \& Pepper noise and IV- Speckle noise.}
\label{tabP-11}
\begin{tabular}{cccccccccc}
\hline
&  &QDPC &QDPA   &Canny &Sobel &Prewitt   &DPC &MDPC   &Ours\\
\hline
&\uppercase\expandafter{\romannumeral1} &{59.5875} &{57.1603}	&{54.8202}	&{61.4068}	&{61.8794}  &{58.5991} &{59.1803}	   &\textbf{62.5796}\\
&\uppercase\expandafter{\romannumeral2} &{59.7487} &{58.0289}	&{56.0615}  &{61.7013}	&{62.0238}  &{59.1115} &{59.5553}	   &\textbf{64.3704}\\
&\uppercase\expandafter{\romannumeral3} &{58.9160} &{56.4631}	&{54.2282}	&{59.9397}	&{60.7958}	&{57.5608} &{58.4340}	   &\textbf{61.6690}\\
&\uppercase\expandafter{\romannumeral4} &{59.4685} &{57.4581}	&{54.7294}	&{60.3342}	&{61.5622}	&{58.4213} &{59.1584}	   &\textbf{62.9552}\\
\hline
\end{tabular}
\end{table*}
%=================
%==============================================
\begin{table*}[th]
\centering
\caption{The SSIM comparison values for Fig. \ref{test11}.
 Types of noise: I- Gaussian noise, II- Poisson noise, III- Salt \& Pepper noise and IV- Speckle noise.}
\label{tabS-11}
\begin{tabular}{cccccccccc}
\hline
&  &QDPC &QDPA &Canny &Sobel &Prewitt  &DPC &MDPC   &Ours\\
\hline
&\uppercase\expandafter{\romannumeral1} &{0.5250} &{0.3732}	&{0.1650}	&{0.5817}	&{0.6089}   &{0.4226} &{0.4711}	   &\textbf{0.7276}\\
&\uppercase\expandafter{\romannumeral2} &{0.5292} &{0.4451}	&{0.3232}   &{0.6508}	&{0.6642}   &{0.4768} &{0.5207}	   &\textbf{0.8188}\\
&\uppercase\expandafter{\romannumeral3} &{0.3871} &{0.2828}	&{0.0920}	&{0.3426}	&{0.3897}	&{0.2471} &{0.3543}	   &\textbf{0.6602}\\
&\uppercase\expandafter{\romannumeral4} &{0.5160} &{0.3899}	&{0.1673}	&{0.4769}	&{0.5877}	&{0.4103} &{0.4779}	   &\textbf{0.7536}\\
\hline
\end{tabular}
\end{table*}
We compare the performance of the proposed algorithm  with seven widely used and noteworthy algorithms. The  noiseless (Fig. \ref{fig5}) and noisy images (Fig. \ref{fig6}) are both taken into consideration. Here, the commonly used color-to-gray conversion formula \cite{gray1, gray2} is applied in the experiments, which is defined as follows
\begin{equation}
  Gray=0.299*R+0.587*G+0.114*B.
\end{equation}

\begin{itemize}
\item {\bf Noiseless case: }
  Fig. \ref{fig5} shows the eight noiseless test images.  Fig. \ref{6noiseless} demonstrates the edge detection results of the noiseless test images of Lena, Men, House and T1. Different rows correspond to the results of different methods. From top to bottom they are QDPC, QDPA, Canny, Sobel, Prewitt, DPC, MDPC and the proposed algorithms, respectively. While Fig. \ref{T1-3} demonstrates the edge maps of the noiseless test images T2, T3, Cara and Frog.
\item {\bf Noisy case: }
  Fig. \ref{fig6} is produced by adding four noises (I-IV) to each image in Fig. \ref{fig5}. The edge maps obtained by applying the QDPC, QDPA, Canny, Sobel, Prewitt, DPC, MDPC and the proposed methods to noisy images T1, T2 and Frog (Fig. \ref{fig6}) are shown in Fig. \ref{nT1}, Fig. \ref{nT2} and Fig. \ref{nFrog}, respectively.
\label{noise}
\end{itemize}

The bottom row of Fig. \ref{nT1}, Fig. \ref{nT2} and Fig. \ref{nFrog} respectively shows the edge results of the noisy image T1, T2 and Frog (Fig. \ref{fig6}) using our proposed method. we can clearly see that the proposed algorithm is able to extract edge maps from  the noisy images. This means that the proposed algorithm is resistant to the noise.  In particular, it is superior to the other detectors on images with Salt \& Pepper noise.

\subsection{Quantitative analysis}

The {PSNR} \cite{psnr} is a widely used method of objective evaluation of two images. It is based on the error-sensitive image quality evaluation.
In addition, the SSIM \cite{ssim} is a method of comparing two images under the three aspects of brightness, contrast and structure.
To show the accuracy of the proposed edge detector, the PSNR  and SSIM  values of various type of edge detectors on noisy images (I- Gaussian noise, II- Poisson noise, III- Salt \& Pepper noise and IV- Speckle noise) are calculated (Table \ref{tabP1} - \ref{tabS2}).\\
Tables \ref{tabP1} - \ref{tabS2} give the comparison results of the PSNR and SSIM values of the test images. Each value in the table represents the similarity between the edge map of the noisy image and the edge map of the noiseless image. That is, the larger the value, the stronger the denoising ability. From the results in Tables \ref{tabP1} and \ref{tabP2}, we obtain the following conclusions.

\begin{itemize}
\item Image Lena, Men, House and T1 results in Table \ref{tabP1} show that the top three algorithms are clearly, that is Sobel, Prewitt and the the proposed one. This shows that these three algorithms can achieve high similarity between the edge map of noisy image and noiseless image. Therefore, from the point of view of PSNR value, these three algorithms have excellent robustness than the others.

\item In Table \ref{tabP2}, for image T2, the top three algorithms are QDPC, QDPA and the proposed algorithm. For images Cara and Frog, although Sobel and DPC also performed well, it is not difficult to see that Prewitt and the proposed algorithm are more robust to the four type of noises than the others. While for image T3, the proposed method has also shown satisfactory performance for Gaussian noise and Speckle noise. On the whole, using the proposed method to do color edge detection on this type of image, the performance is obviously excellent.
\end{itemize}
%%%%%%%%%%%%%%%%%%%%%%%%%%%%%%%%%%%%%%%%%%%%%%%%%%%%%%%%%%%%%%
Tables  \ref{tabS1} and \ref{tabS2} show the SSIM values between the edge maps of noiseless images and the edge maps of the noisy images. The closer the SSIM value is to 1, the better performance of the algorithm is. From the SSIM  values in these tables, we obtain the following conclusions.
\begin{itemize}
\item  From the SSIM values in Table \ref{tabS1}, our proposed algorithm gives  better performance than the other algorithms. For Poisson noise, the noise reduction effect of Sobel, Prewitt and the proposed algorithm are more robust than the other five methods. For Salt \& Pepper noise, the proposed method has the best noise reduction performance.
    While for Gaussian noise and Speckle noise, the proposed menthod is still in the top three. Therefore, from the perspective of SSIM value, the denoising performance of the proposed method is maintained very well.

\item Table \ref{tabS2} shows that, for image T2 and T3, the top three algorithms are QDPC and QDPA and the proposed algorithm. For image Cara and Frog, the top three algorithms are Sobel, Prewitt and the proposed algorithm. In general, the proposed algorithm SSIM values are always in the top three. Therefore, the noise immunity of the proposed method is optimal.
\end{itemize}
Tables  \ref{tabP-11} and \ref{tabS-11} show the PSNR and SSIM values between the edge maps of noiseless images and the edge maps of the noisy images, respectively. Each value in these tables is an average of the results for all the images in Fig. \ref{test11}. In particular, the images in Fig. \ref{test11} are randomly selected from the public dataset (http://decsai.ugr.es/cvg/dbimagenes/). From the result values in Tables  \ref{tabP-11} and \ref{tabS-11}, we find that the proposed method can achieve superior performance over state-of-the-art methods to detect edge, which demonstrates the effectiveness and feasibility of their practical use. It is more robust against noises.

\section{Conclusions and Discussions}
\label{sec:Con}

In this paper, we have proposed QHF as an effective tool for color image processing. Different from quaternion analytic signal, the QHF contains two parameters  that offers flexible perspective to deal with different color noisy images. Based on QHF and the improved Di Zenzo gradient operator, we proposed a new edge detection algorithm. Several experiments including visual comparison and quantitative analysis are conducted in the paper to verify the effectiveness of the proposed color image edge detection algorithm. However, the noisy images considered in this article each only involved one single kind of noise disturbance. In the future, further speed optimization needs to be invested and mixed types of noise \cite{B44}-\cite{3} situation should be considered for test.
% if have a single appendix:
%\appendix[Proof of the Zonklar Equations]
% or
%\appendix  % for no appendix heading
% do not use \section anymore after \appendix, only \section*
% is possibly needed

% use appendices with more than one appendix
% then use \section to start each appendix
% you must declare a \section before using any
% \subsection or using \label (\appendices by itself
% starts a section numbered zero.)
%
\appendices

% you can choose not to have a title for an appendix
% if you want by leaving the argument blank

% use section* for acknowledgement
% Can use something like this to put references on a page
% by themselves when using endfloat and the captionsoff option.
\ifCLASSOPTIONcaptionsoff
  \newpage
\fi

% trigger a \newpage just before the given reference
% number - used to balance the columns on the last page
% adjust value as needed - may need to be readjusted if
% the document is modified later
%\IEEEtriggeratref{8}
% The "triggered" command can be changed if desired:
%\IEEEtriggercmd{\enlargethispage{-5in}}

% references section

% can use a bibliography generated by BibTeX as a .bbl file
% BibTeX documentation can be easily obtained at:
% http://www.ctan.org/tex-archive/biblio/bibtex/contrib/doc/
% The IEEEtran BibTeX style support page is at:
% http://www.michaelshell.org/tex/ieeetran/bibtex/
%\bibliographystyle{IEEEtran}
% argument is your BibTeX string definitions and bibliography database(s)
%\bibliography{IEEEabrv,../bib/paper}
%
% <OR> manually copy in the resultant .bbl file
% set second argument of \begin to the number of references
% (used to reserve space for the reference number labels box)

\end{document}